\pgfplotsset{compat=1.18}
\definecolor{Gray}{gray}{0.9}
\definecolor{azure(web)(azuremist)}{rgb}{0.94, 1.0, 1.0}
\newtcolorbox{myframe}[2][]{%
	enhanced,colback=white,colframe=black,coltitle=black,
	sharp corners,boxrule=0.6pt,
	fonttitle=\itshape,
	attach boxed title to top left={yshift=-0.3\baselineskip-0.4pt,xshift=2mm},
	boxed title style={tile,size=minimal,left=0.5mm,right=0.5mm,
		colback=white,before upper=\strut},
	title=#2,#1
}
\definecolor{Gray}{gray}{0.9}
\definecolor{airforceblue}{rgb}{0.36, 0.54, 0.66}
\definecolor{aliceblue}{rgb}{0.94, 0.97, 1.0}
\definecolor{alizarin}{rgb}{0.82, 0.1, 0.26}
\definecolor{amber}{rgb}{1.0, 0.75, 0.0}
\definecolor{amber(sae/ece)}{rgb}{1.0, 0.49, 0.0}
\definecolor{antiquebrass}{rgb}{0.8, 0.58, 0.46}
\definecolor{azure(web)(azuremist)}{rgb}{0.94, 1.0, 1.0}
\definecolor{bronze}{rgb}{0.8, 0.5, 0.2}
\definecolor{battleshipgrey}{rgb}{0.52, 0.52, 0.51}
\definecolor{bole}{rgb}{0.47, 0.27, 0.23}
\definecolor{bulgarianrose}{rgb}{0.28, 0.02, 0.03}
\definecolor{cadet}{rgb}{0.33, 0.41, 0.47}
\definecolor{ceil}{rgb}{0.57, 0.63, 0.81}
\definecolor{cerulean}{rgb}{0.0, 0.48, 0.65}
\definecolor{charcoal}{rgb}{0.21, 0.27, 0.31}
\definecolor{coolblack}{rgb}{0.0, 0.18, 0.39}
\definecolor{coolgrey}{rgb}{0.55, 0.57, 0.67}
\definecolor{darkcandyapplered}{rgb}{0.64, 0.0, 0.0}
\definecolor{darkbrown}{rgb}{0.4, 0.26, 0.13}
\definecolor{darkcerulean}{rgb}{0.03, 0.27, 0.49}
\definecolor{darkgray}{rgb}{0.66, 0.66, 0.66}
\definecolor{darkjunglegreen}{rgb}{0.1, 0.14, 0.13}
\definecolor{darktaupe}{rgb}{0.28, 0.24, 0.2}
\definecolor{davy\'sgrey}{rgb}{0.33, 0.33, 0.33}
\definecolor{frenchblue}{rgb}{0.0, 0.45, 0.73}
\definecolor{almond}{rgb}{0.94, 0.87, 0.8}
\definecolor{beaublue}{rgb}{0.74, 0.83, 0.9}
\definecolor{beige}{rgb}{0.96, 0.96, 0.86}
\definecolor{bisque}{rgb}{1.0, 0.89, 0.77}
\definecolor{black}{rgb}{0.0, 0.0, 0.0}
\definecolor{fluorescentorange}{rgb}{1.0, 0.75, 0.0}
\definecolor{ghostwhite}{rgb}{0.97, 0.97, 1.0}
\definecolor{antiquewhite}{rgb}{0.98, 0.92, 0.84}
\newtheorem{definition}{Definition}
\newtheorem{myAttack}{Attack}
\newtheorem{proposition}{Proposition}
\begin{document}

\pagenumbering{arabic}
\newcommand{\protocol}{PervPPML}
\newcommand{\twoPervPPML}{2PervPPML}
\newcommand{\threePervPPML}{3PervPPML}
\newcommand{\allPPML}{2PervPPML/3PervPPML}
\newcommand{\trustedPervPPML}{TrustedPervPPML}
\newcommand{\protocolTTT}{\texttt{PervPPML}}
\newcommand{\twoPervPPMLTTT}{\texttt{2PervPPML}}
\newcommand{\threePervPPMLTTT}{\texttt{3PervPPML}}
\newcommand{\trustedPervPPMLTTT}{\texttt{TrustedPervPPML}}
\newcommand{\ecgPPML}{\texttt{ecgPPML}}

\makeatletter
\newcommand{\linebreakand}{%
  \end{@IEEEauthorhalign}
  \hfill\mbox{}\par
  \mbox{}\hfill\begin{@IEEEauthorhalign}
}
\makeatother

\title{A Pervasive, Efficient and Private Future: Realizing Privacy-Preserving Machine Learning Through Hybrid Homomorphic Encryption}

\author{\IEEEauthorblockN{Khoa Nguyen}
\IEEEauthorblockA{Department of Computing Sciences \\
Tampere University\\
Tampere, Finland \\
\href{mailto:khoa.nguyen@tuni.fi}{khoa.nguyen@tuni.fi}}
\and
\IEEEauthorblockN{Mindaugas Budzys}
\IEEEauthorblockA{Department of Computing Sciences \\
Tampere University\\
Tampere, Finland \\
\href{mailto:mindaugas.budzys@tuni.fi}{mindaugas.budzys@tuni.fi}}
\and
\IEEEauthorblockN{Eugene Frimpong}
\IEEEauthorblockA{Department of Computing Sciences \\
Tampere University\\
Tampere, Finland \\
\href{mailto:eugene.frimpong@tuni.fi}{eugene.frimpong@tuni.fi}}
\linebreakand
\IEEEauthorblockN{Tanveer Khan}
\IEEEauthorblockA{Department of Computing Sciences \\
Tampere University\\
Tampere, Finland \\
\href{mailto:tanveer.khan@tuni.fi}{tanveer.khan@tuni.fi}}
\and
\IEEEauthorblockN{Antonis Michalas}
\IEEEauthorblockA{Department of Computing Sciences \\
Tampere University, Tampere, Finland and\\
RISE Research Institutes of Sweden \\
\href{mailto:antonios.michalas@tuni.fi}{antonios.michalas@tuni.fi}}
}

\maketitle
\thispagestyle{plain}
\pagestyle{plain}
\begin{abstract}
Machine Learning (ML) has become one of the most impactful fields of data science in recent years. However, a significant concern with ML is its privacy risks due to rising attacks against ML models. Privacy-Preserving Machine Learning (PPML) methods have been proposed to mitigate the privacy and security risks of ML models. A popular approach to achieving PPML uses Homomorphic Encryption (HE). However, the highly publicized inefficiencies of HE make it unsuitable for highly scalable scenarios with resource-constrained devices. Hence, Hybrid Homomorphic Encryption (HHE) -- a modern encryption scheme that combines symmetric cryptography with HE -- has recently been introduced to overcome these challenges. HHE potentially provides a foundation to build new efficient and privacy-preserving services that transfer expensive HE operations to the cloud. This work introduces HHE to the ML field by proposing resource-friendly PPML protocols for edge devices. More precisely, we utilize HHE as the primary building block of our PPML protocols. We assess the performance of our protocols by first extensively evaluating each party's communication and computational cost on a dummy dataset and show the efficiency of our protocols by comparing them with similar protocols implemented using plain BFV. Subsequently, we demonstrate the real-world applicability of our construction by building an actual PPML application that uses HHE as its foundation to classify heart disease based on sensitive ECG data. 
\end{abstract}

\begin{IEEEkeywords}
Hybrid Homomorphic Encryption, Machine Learning as a Service, Privacy-Preserving Machine Learning
\end{IEEEkeywords}

\section{Introduction}
\label{sec: introduction}

In less than a decade, ML has been transformed from a quiet backwater of computing science into the hottest and most hyped area of science, allowing various users to classify and make predictions on multi-dimensional data. 
However, to achieve this, the results should be accompanied by a large amount of high-quality training data requiring the collaboration of several organizations. 
Regulations such as the 
GDPR forbid the sharing and processing of sensitive data without the data subject's consent. It has, therefore, become crucial to uphold data privacy, confidentiality, and profit-sharing while obtaining data from third parties. 
One solution to this problem is employing Privacy-Preserving Machine Learning (PPML). PPML ensures that the use of data protects user privacy and that data is utilized in a safe fashion, avoiding leakage of confidential and private information. To this end, researchers have proposed and implemented various PPML-achieving techniques, varying from secure cryptographic schemes to distributed, hybrid, and data modification approaches. This work focuses on cryptographic approaches, the most commonly used one being Homomorphic Encryption (HE)~\cite{rivest1978data,gentry2009fully}, which exhibits a high potential in ML applications such as Machine-Learning-as-a-Service (MLaaS). In MLaaS applications, data is encrypted under an HE scheme and transferred to the cloud for processing. Furthermore, due to the versatility of HE, it can also be used to keep the model private through model parameter encryption. However, despite the various advances in HE, it has yet to find mainstream use because of high computational complexity and extended ciphertext expansion resulting in very large ciphertexts -- the transmission of which becomes a significant bottleneck in practice~\cite{dobraunig2021pasta}. To address these issues, researchers have turned to the Hybrid Homomorphic Encryption (HHE) concept~\cite{dobraunig2021pasta,bakas2019modern}, which uses symmetric ciphers to make HE more accessible for users.

\begin{tcolorbox}[colback=white!10!white,
                     colframe=white!20!black,
                     title=\textsc{\LARGE\texttt{HHE in a Nutshell}},  
                     center, 
                     valign=top, 
                     halign=left,
                     before skip=0.3cm, 
                     after skip=0.6cm,
                     center title]

In an HHE scheme, users locally encrypt data via a symmetric key encryption scheme rather than encrypt data with only an HE scheme. Subsequently, they homomorphically encrypt the symmetric key used in
\end{tcolorbox}
\begin{tcolorbox}[colback=white!10!white,
                     colframe=white!20!black,
                    ]
the encryption process with an HE scheme. The two ciphertexts resulting from the above-mentioned encryptions are then forwarded to the server. Upon receiving both ciphertexts, the server transforms the symmetric ciphertext into a homomorphic ciphertext using the homomorphically encrypted symmetric key. The use of HHE produces ciphertexts of a substantially smaller size compared to using only HE schemes because of the symmetric encryption. The reduced size drastically lowers the communication overhead a user would typically incur using only a traditional HE approach. However, due to high multiplicative depth, not all symmetric key encryption schemes are compatible with HHE. To this end, researchers have designed a number of HE-friendly symmetric ciphers for use in HHE schemes, such as HERA/Rubato~\cite{cho2021transciphering, ha2022rubato}, Elisabeth~\cite{cosseron2022towards} and PASTA~\cite{dobraunig2021pasta}. Each of these ciphers has noticeable differences, which we discuss in \autoref{sec:relatedwork}. Aside from reducing the size of ciphertext, HHE also provides a way for most consumer-grade devices to benefit from the advantages of HE schemes by transferring the more computationally expensive operations to the server or cloud.
\end{tcolorbox}


For this work, we adopt the concept of HHE, as implemented in PASTA~\cite{dobraunig2021pasta}, to address the problems of PPML, which are not solved by using only HE. To achieve this, we expand upon a previous work, named GuardML~\cite{10.1145/3605098.3635983}, by providing a new construction making use of a Trusted Execution Environment (TEE) as well as expanding upon the threat model and experimental results. Apart from the novelty of incorporating HHE in PPML, we also aim to smooth out the big hurdles of implementing strong PPML models in a wide range of devices.

\smallskip



This paper makes the following technical contributions: 
\begin{enumerate}
	\item We show how the novel concept of HHE can be used effectively to address the problem of PPML. This is accomplished by expanding the use case of PASTA to ML applications. Apart from the novelty of incorporating HHE in PPML, our approach's main aim is to smooth out the big hurdles of implementing strong PPML models in a wide range of devices. We believe that this might spur PPML into a new and more interesting territory that has the potential to provide a very careful balance between functionality and privacy, especially in the field of pervasive computing where, in many cases, the underlying infrastructure has certain inbuilt limitations.
	
    \item We formally design two protocols that allow an authorized entity (e.g.\ an analyst) to process encrypted data efficiently \textit{as if} they were unencrypted. In one of our proposed protocols, we utilize a Trusted Execution Environment (TEE) to achieve a completely trusted architecture. By using a TEE, our protocol supports a secure and integrity-protected processing environment for key operations separate from the other participating entities. Subsequently, we prove the security of our protocols by showing that there is no leakage of information that could potentially breach user privacy.
	
    \item Through extensive experiments, we show that our approach has lower computational and communications overhead when compared to existing approaches based on the traditional HE concept without sacrificing security, privacy, and accuracy. We also argue that our solution applies to a wide range of devices  -- a result we believe can prove to be fundamental for many applications. To this end, our protocol's applicability to a real-world ML scenario with a sensitive medical dataset is demonstrated. Experimental results show that compared to the plaintext version, our protocol for PPML achieves almost comparable accuracy results while preserving both the privacy of the dataset and the neural network. Furthermore, the majority of computation cost is outsourced to the CSP.
 
\end{enumerate}

\section{Related Works}
\label{sec:relatedwork}

Various recent works have proposed using HE schemes in implementing PPML~\cite{khan2024learning,khan2021blind,khan2023love,khan2023more,khan2023split}. Gentry's work~\cite{gentry2009fully} revolutionized the field of HE and paved the way for multiple modern schemes, such as TFHE~\cite{chillotti2020tfhe}, BFV~\cite{fan2012somewhat}, and CKKS~\cite{cheon2017homomorphic} in PPML applications. Each HE scheme introduces different advantages and limitations in ML. Hence, it is important to understand their differences (\autoref{tab:comp-he-schemes}). These schemes have achieved high-accuracy results in PPML. Examples of TFHE-based PPML works are TAPAS~\cite{sanyal2018tapas}, FHE-DiNN~\cite{bourse2018fast} and Glyph~\cite{lou2020glyph}. An example of a CKKS-based PPML protocol is POSEIDON~\cite{sav2020poseidon}, while an example of a BFV-based PPML work is HCNN~\cite{al2020towards}. HE schemes suffer from large ciphertext sizes and high computational complexities, which make them unsuitable for every environment~\cite{khan2024wildest}.

\begin{table*}[ht!]
\centering
\resizebox{.75\linewidth}{!}{%
\begin{tabular}{ |>{\columncolor{Gray}}c|c|c|c| } 
 \hlineB{4}
 \rowcolor{Gray} 
 \cellcolor{azure(web)(azuremist)} & \textbf{CKKS}~\cite{cheon2017homomorphic} & \textbf{TFHE}~\cite{chillotti2016faster} & \textbf{BFV}~\cite{brakerski2012fully} \\ 
 \hlineB{3}
 \cellcolor{Gray}\textbf{Data Type} & Floating point & Binary & Integer \\ 
 \hline
 \cellcolor{Gray}\textbf{HE Type} & Somewhat HE & Fully HE & Somewhat HE\\ 
 \hline
 \cellcolor{Gray}\textbf{Batching} & Yes & No* & Yes \\
 \hline
 \cellcolor{Gray}\textbf{Non-linear Activation Functions} & Polynomial Approximation & Look-up Tables & Polynomial Approximation \\ 
 \hline
 \cellcolor{Gray}\textbf{Precision} & Highest & Lower than CKKS & Lower than CKKS \\ 
 \hline
 \cellcolor{Gray}\textbf{Memory usage} & Low & High & Mid \\
 \hline
 \cellcolor{Gray}\textbf{Efficiency} & Mid & High & Low \\ 
 \hlineB{3}
\end{tabular}
}
\caption{\centering Comparison of different HE schemes. * -- TFHE can implement horizontal or vertical packing, which functions similarly to BFV and CKKS batching but is computed differently.}
\label{tab:comp-he-schemes}
\end{table*}

In response to these challenges, the first approaches to designing HHE schemes relied on existing and well-established symmetric ciphers such as AES~\cite{gentry2012homomorphic}. However, AES has since been proven not to be a good fit for HHE schemes, primarily due to its large multiplicative depth~\cite{dobraunig2021pasta}. Thus research in the field of HHE took a different approach, where the main focus has been shifted to the design of symmetric ciphers with different optimization criteria~\cite{bakas2022symmetrical}, such as eliminating the ciphertext expansion~\cite{canteaut2018stream} or using filter permutators~\cite{meaux2019improved}.  
However, to date, HHE has seen limited practical application~\cite{bakas2022symmetrical} in real-world applications, and only a handful of works exist in the field of PPML. To the best of our knowledge, the \textit{main} HHE schemes currently are HERA~\cite{cho2021transciphering}, Elisabeth~\cite{cosseron2022towards} and PASTA~\cite{dobraunig2021pasta}. The authors of HERA also proposed Rubato~\cite{ha2022rubato}; however, the specifications remain largely the same as in HERA. These proposed approaches have different specifications and can be applied to different use cases. We provide an overview of these approaches in \autoref{tab:comp-hhe-schemes}. HERA~\cite{cho2021transciphering} is a stream cipher based on the CKKS HE scheme and allows computations on floating point data types. In comparison, Elisabeth~\cite{cosseron2022towards} is designed to utilize the TFHE scheme, while PASTA~\cite{dobraunig2021pasta} is based on BFV for integer data types. HERA and PASTA are defined over $\mathbb{Z}_{\mathsf{q}}$, where $\mathsf{q} = 2^{16} + 1$, and can store up to 16-bit inputs. 
Meanwhile, Elisabeth is defined over $\mathbb{Z}_{\mathsf{q}}$, where $\mathsf{q} = 2^4$, and can store up to 4 bits of data.

\begin{table*}[ht!]
\centering
\begin{tabular}{ |>{\columncolor{Gray}}c|c|c|c| } 
 \hlineB{4}
 \rowcolor{Gray}
 \cellcolor{azure(web)(azuremist)} & \textbf{HERA}~\cite{cho2021transciphering} & \textbf{Elisabeth}~\cite{cosseron2022towards} & \textbf{PASTA}~\cite{dobraunig2021pasta} \\ 
 \hlineB{3}
 \cellcolor{Gray}\textbf{Programming Language} & Go & Rust & C \\ 
 \hline
 \cellcolor{Gray}\textbf{Data Type} & Floating point & Binary & Integer \\ 
 \hline
 \cellcolor{Gray}\textbf{HE Scheme} & CKKS & TFHE & BFV\\ 
 \hline
 \cellcolor{Gray}\textbf{Application in ML} & No & Yes & No \\
 \hline
 \cellcolor{Gray}\textbf{Defined over} & $\mathbb{Z}_\mathsf{q}$, where $\mathsf{q} > 2^{16}$ & $\mathbb{Z}_\mathsf{q}$, where $\mathsf{q} = 2^{4}$ &$\mathbb{F}_\mathsf{p}$, where $\mathsf{p}$ is a 16-bit prime \\ 
 \hline
 \cellcolor{Gray}\textbf{Security Level} & 80 or 128-bit & 128-bit & 128-bit \\ 
 \hline
 \cellcolor{Gray}\textbf{Quantization} & No & Yes & Yes \\ 
 \hlineB{3}
\end{tabular}
\caption{Comparison of different HHE schemes}
\label{tab:comp-hhe-schemes}
\end{table*}

Each approach achieves the same security level of~128-bits. Additionally, HERA also provides tests for a security level of 80 bits. As HERA allows computations on floating point data types, it does not require quantization on certain inputs. Elisabeth and PASTA, on the other hand, require quantization to operate on floating point numbers, which introduces a rounding error, which can reduce the accuracy of certain applications, i.e.\ ML. To the best of our knowledge, HERA has yet to be applied to an ML application. One of the use cases provided by Elisabeth is a CNN classification task on the Fashion-MNIST dataset and shows equivalent accuracy~($84.18\%$) to the cleartext model without HE. 
For this work, we adopt the PASTA HHE scheme as the primary building block for our protocols based on the following reasons:
\begin{itemize}
\item We compared the implementations of PASTA and HERA by utilizing the available open-source code\footnote{HERA: \url{https://github.com/KAIST-CryptLab/RtF-Transciphering}}\textsuperscript{,}\footnote{PASTA: \url{https://github.com/IAIK/hybrid-HE-framework}} on our systems. Our findings indicate that HERA requires substantial memory, with RAM usage reaching approximately 60GB for 80-bit security. Conversely, PASTA exhibited significantly lower memory requirements while achieving 128-bit security, which led to our decision to use PASTA over HERA. 
 
\item We aim to evaluate the suitability of PASTA when applied to PPML and investigate the feasibility of applying PASTA to various real-world ML datasets, particularly medical datasets, with the potential to achieve comparable accuracy and higher efficiency than conventional HE techniques. 

\end{itemize}
	
\color{black}
\section{Hybrid Homomorphic Encryption}
\label{subsec: hhe}

\begin{definition}[Hybrid Homomorphic Encryption] Let $\mathsf{HE}$ be a Homomorphic Encryption scheme and $\mathsf{SKE} = (\mathsf{Gen, Enc, Dec})$ be a symmetric-key encryption scheme. Moreover, let $\mathcal{X} = (x_1, \dots, x_n)$ be the message space and $\lambda$ the security parameter. An $\mathsf{HHE}$ scheme 
consists of five PPT algorithms $\mathsf{HHE = (KeyGen, Enc, Decomp,}$ \allowbreak $\mathsf{ Eval, Dec)}$ such that: 
\end{definition}

\begin{itemize}
		\item $\mathbf{HHE.KeyGen}$: The key generation algorithm takes as input a security parameter $\lambda$ and outputs a HE public/private key pair ($\mathsf{pk}$/$\mathsf{sk}$) and a HE evaluation key ($\mathsf{evk}$).

		\item $\mathbf{HHE.Enc}$: The encryption algorithm consists of three steps:
		\begin{itemize}
			\item $\mathsf{SKE.Gen}$: The SKE generation algorithm takes as input the security parameter $\lambda$ and outputs a symmetric key $\mathsf{K}$.
			
                \item $\mathsf{HE.Enc}$: An HE encryption algorithm that takes as input $\mathsf{pk}$ and $\mathsf{K}$, and outputs $c_\mathsf{K}$ -- a homomorphically encrypted representation of the symmetric key $\mathsf{K}$.
			\item $\mathsf{SKE.Enc}$: The SKE encryption algorithm takes as input a message $x$ and $\mathsf{K}$ and outputs a ciphertext $c$.
		\end{itemize}
		\item $\mathbf{HHE.Decomp}$: This algorithm takes as an input the evaluation key $\mathsf{evk}$, the symmetrically encrypted ciphertext $c$, and the homomorphically encrypted symmetric key $c_\mathsf{K}$, and outputs $c'$ -- a homomorphic encryption of the original message $x$. 

		\item $\mathbf{HHE.Eval}$: This algorithm takes as input $n$ homomorphic ciphertexts $c'_n$, where $n \geq 2$, the evaluation key $\mathsf{evk}$ and a homomorphic function $f$, and outputs a ciphertext $c'_{eval}$ of the evaluation results.

		\item $\mathbf{HHE.Dec}$: The decryption algorithm takes as input a private key $\mathsf{sk}$ and the evaluated ciphertext $c'_{eval}$ and outputs $f(x)$.  

	\end{itemize} 

The correctness of an HHE scheme follows directly from the correctness of the underlying public-key HE scheme.

\section{System Model}
\label{sec:architecture}
In this section, we introduce our system model by explicitly describing our protocol's 
main entities and 
their capabilities. 
\begin{itemize}
	\item \textbf{User}: Let $\mathcal{U} = \{u_1, \ldots, u_n\}$ be the set of all users. Each user generates a unique symmetric key $\mathsf{K_i}$ locally and encrypts their data. Then the generated ciphertexts are outsourced to the $\mathbf{CSP}$ along with an HE encryption $c_\mathsf{K_i}$ of the underlying symmetric key.

	\item \textbf{Cloud Service Provider (CSP)}: 
 Primarily responsible for gathering symmetrically encrypted data from multiple users. The $\mathbf{CSP}$ is tasked with converting the symmetrically encrypted data into homomorphic ciphertexts and, upon request, performing blind operations on them. 

    \item \textbf{Trusted Execution Environment (TEE)}: In one of our proposed protocols, we utilize a \textbf{TEE} on the $\mathbf{CSP}$. In this proposed use case, the \textbf{TEE} is responsible for securely sharing results from the learning phase of our protocol with a requesting party. The specifics of the \textbf{TEE} are beyond the scope of this work; as such, we do not delve into the finer details. A more detailed description is provided in~\cite{tee2015,paju2023sok}. 
    
	\item \textbf{Analyst (A)}: In 
our proposed protocols, there exists an analyst who owns an ML model and is interested in learning the output of ML operations on the encrypted data stored at the $\mathbf{CSP}$. In these protocols, $\mathbf{A}$ decrypts the encrypted data from the HE evaluation of collected encrypted data and, thus, gains insights from user data.
\end{itemize}

\section{PervPPML}
\label{sec:protocol}
In this section, we discuss in detail the construction of \protocolTTT{} -- our Hybrid Homomorphic Privacy-Preserving protocols that constitute the core of this paper. \protocolTTT{} is comprised of two protocols, 
\threePervPPMLTTT{} and \trustedPervPPMLTTT{}. The primary differences between the protocols mentioned above lie in the presence of 
a \textbf{TEE}. 
\threePervPPMLTTT, is ideal for a setting where the analyst is the owner of the ML model and does not wish to reveal the contents to the \textbf{CSP}. 
While \trustedPervPPMLTTT{} would be ideal for a zero-trust setting where each party can collude with another party to break the protocol.   



\noindent\textbf{\textit{Building Blocks:}} 
\label{subsubsec:buildingblocks}
Before proceeding to describe each protocol, we first define the building blocks 
used in our constructions. 

\begin{itemize}
	\item A secure symmetric cipher $\mathsf{SKE = (Gen, Enc, Dec)}$.
	\item A BFV-based HHE scheme $\mathsf{HHE = (KeyGen, Enc, Dec, Decomp,}$$ \allowbreak \mathsf{Eval)}$.
    \item A CCA2 secure public-key encryption scheme $\mathsf{PKE = (Gen,}$\allowbreak$\mathsf{ Enc, Dec)}$.
	\item An EUF-CMA secure signature scheme $\mathsf{\sigma = (sign, ver)}$.    
	\item A first and second pre-image resistant cryptographic hash function $\mathsf{H(\cdot)}$.
\end{itemize}

\subsection{PervPPML: 3-Party Setting}
\label{subsec:3party}
\noindent \textbf{\textit{High-Level Overview:}} \enskip 
\threePervPPMLTTT{} consists of three parties: a set of users $\mathcal{U}$, \textbf{CSP}, and an analyst \textbf{A}. 
The HHE keys are generated by \textbf{A}. 
In the \threePervPPMLTTT{} setting, we assume that an analyst \textbf{A} owns the ML model with parameters $(w, b)$, while a user $u_i$ provides the data $x_i$. \textbf{A} generates the required HHE keys ($\mathsf{pk_{A}, sk_{A}, evk_{A}}$), publishes $\mathsf{pk_{A}}$, and sends $\mathsf{evk_{A}}$ to the \textbf{CSP}. Each $u_{i}$ generates a symmetric key $\mathsf{K_{i}}$ and encrypts their data $x_{i}$ locally to output a symmetric ciphertext $c_{x_i}$. Additionally, $u_{i}$ also generates $c_{\mathsf{K_{i}}}$ -- a homomorphic encryption of the symmetric key $\mathsf{K_{i}}$ using \textbf{A}'s public key and sends both the encrypted data $c_{x_i}$ and $c_{\mathsf{K_{i}}}$ to the \textbf{CSP}. Upon reception, the \textbf{CSP} stores the values locally. In the evaluation phase, \textbf{A} can request a prediction to be performed on the stored $c'_{x_i}$ by first sending to the \textbf{CSP} their homomorphically encrypted pre-trained ML model parameters $(c_{w}, c_{b})$. On receiving the request from \textbf{A}, the \textbf{CSP} transforms $c_{x_i}$ into homomorphic ciphertext $c'_{x_i}$. \textbf{CSP} produces an encrypted result $c_{res}$ and sends the results back to \textbf{A}. Finally, \textbf{A} decrypts the prediction result $res$ using $\mathsf{sk_A}$.

\begin{figure}[ht]
\centering
	\resizebox{.5\textwidth}{!}{%
		\tikzset{every picture/.style={line width=0.75pt}} 
		\begin{tikzpicture}[x=0.75pt,y=0.75pt,yscale=-1,xscale=1]
			\draw [color={rgb, 255:red, 0; green, 0; blue, 0 }  ,draw opacity=1 ][line width=1.5]  [dash pattern={on 1.69pt off 2.76pt}]  (68.5,292) -- (424,292) -- (805.5,291) ;

			\draw  [dash pattern={on 0.84pt off 2.51pt}]  (50.5,312) -- (105,312) ;
			\draw  [dash pattern={on 0.84pt off 2.51pt}]  (50.5,435) -- (99,435) ; 
			\draw  [dash pattern={on 0.84pt off 2.51pt}]  (50.5,547.5) -- (99,547.5) ;
			\draw  [dash pattern={on 0.84pt off 2.51pt}]  (50.5,810) -- (99,810) ;
			\draw  [dash pattern={on 0.84pt off 2.51pt}]  (50.5,1120) -- (99,1120) ;
			
            \draw  [dash pattern={on 4.5pt off 4.5pt}]  (107.5,269) -- (107.5,432) ;
            \draw  [dash pattern={on 4.5pt off 4.5pt}]  (340,269) -- (340,390) ;
            \draw  [dash pattern={on 4.5pt off 4.5pt}]  (520,269) -- (520,314) ;
            \draw  [dash pattern={on 4.5pt off 4.5pt}]  (753.5,269) -- (753.5,432) ;

            \draw   (76,233) -- (144,233) -- (144,268.5) -- (76,268.5) -- cycle ; 
            \draw   (305,233) -- (375,233) -- (375,268.5) -- (305,268.5) -- cycle ; 
			\draw   (490,233) -- (550.06,233) -- (550.06,268.5) -- (490,268.5) -- cycle ;
            \draw   (718,233) -- (785.06,233) -- (785.06,268.5) -- (718,268.5) -- cycle ;
			
			\draw (80,243) node [anchor=north west][inner sep=0.75pt]   [align=left] [font=\Large] {\textbf{Analyst}};
            \draw (324,243) node [anchor=north west][inner sep=0.75pt]   [align=left] [font=\Large] {\textbf{CSP}};
            \draw (503.97,243) node [anchor=north west][inner sep=0.75pt]   [align=left] [font=\Large] {\textbf{TEE}};
            \draw (735,243) node [anchor=north west][inner sep=0.75pt]   [align=left] [font=\Large] {\textbf{User}};

            \draw (70.19,390.44) node [anchor=north west][inner sep=0.75pt]  [rotate=-270.41] [align=left] [font=\Large] {\textbf{Setup}};
			\draw (70.19,510.44) node [anchor=north west][inner sep=0.75pt]  [rotate=-270.41] [align=left] [font=\Large] {\textbf{Upload}};
			\draw (70.19,690.44) node [anchor=north west][inner sep=0.75pt]  [rotate=-270.41] [align=left] [font=\Large] {\textbf{Eval}};
			\draw (70.19,985.44) node [anchor=north west][inner sep=0.75pt]  [rotate=-270.41] [align=left] [font=\Large] {\textbf{Classify}};

            \draw   (99,430) -- (116.5,430) -- (116.5,1120) -- (99,1120) -- cycle ;
            \draw   (333,390) -- (350.5,390) -- (350.5,1120) -- (333,1120) -- cycle ;
            \draw   (512,312) -- (529.5,312) -- (529.5,1120) -- (512,1120) -- cycle ;
            \draw   (745,430) -- (762.5,430) -- (762.5,1120) -- (745,1120) -- cycle ;

            \draw    (530,318) -- (591.5,318) ;
			\draw    (591.5,318) -- (591.5,339) ;
			\draw    (545.5,339) -- (591.5,339) ;
			\draw (593.5,321) node [anchor=north west][inner sep=0.75pt]   [align=left] [font=\Large] {Run$\displaystyle \ \mathsf{HHE.KeyGen}$};

            \draw [shift={(543.5,339)}, rotate = 0] [fill={rgb, 255:red, 0; green, 0; blue, 0 }  ][line width=0.08]  [draw opacity=0] (8.93,-4.29) -- (0,0) -- (8.93,4.29) -- cycle    ; 
			\draw   (529.5,339) -- (542.5,339) -- (542.5,372) -- (529.5,372) -- cycle ;      

            \draw  [dash pattern={on 0.84pt off 2.51pt}]  (541.5,372) -- (591.5,372) ;
			\draw  [dash pattern={on 0.84pt off 2.51pt}]  (591.5,372) -- (591.5,393) ;
			\draw  [dash pattern={on 0.84pt off 2.51pt}]  (529.5,392) -- (591.5,392) ;
   
			\draw [shift={(529.5,392)}, rotate = 0] [fill={rgb, 255:red, 0; green, 0; blue, 0 }  ][line width=0.08]  [draw opacity=0] (8.93,-4.29) -- (0,0) -- (8.93,4.29) -- cycle    ; 
			
			\draw (593.5,375) node [anchor=north west][inner sep=0.75pt]   [align=left] [font=\Large] {$\displaystyle (\mathsf{pk_T ,sk_T ,evk_T}) \ $};

            \draw    (350,410) -- (512,410) ;
            \draw [shift={(350,410)}, rotate = 0] [fill={rgb, 255:red, 0; green, 0; blue, 0 }  ][line width=0.08]  [draw opacity=0] (8.93,-4.29) -- (0,0) -- (8.93,4.29) -- cycle    ;

            \draw (410.5,390) node [anchor=north west][inner sep=0.75pt]   [align=left] [font=\Large] {$\displaystyle \mathsf{evk_T} \ $};

            \draw    (763,435) -- (824.5,435) ;
			\draw    (824.5,435) -- (824.5,456) ; 
			\draw    (782.5,456) -- (824.5,456) ;
			\draw [shift={(776,456)}, rotate = 0] [fill={rgb, 255:red, 0; green, 0; blue, 0 }  ][line width=0.08]  [draw opacity=0] (8.93,-4.29) -- (0,0) -- (8.93,4.29) -- cycle    ;
			\draw   (763,456) -- (776,456) -- (776,489) -- (763,489) -- cycle ;
			\draw  [dash pattern={on 0.84pt off 2.51pt}]  (776,489) -- (824.5,489) ;
			\draw  [dash pattern={on 0.84pt off 2.51pt}]  (824.5,489) -- (824.5,510) ;
			\draw  [dash pattern={on 0.84pt off 2.51pt}]  (824.5,510) -- (766,510) ;
			\draw [shift={(763,510)}, rotate = 0] [fill={rgb, 255:red, 0; green, 0; blue, 0 }  ][line width=0.08]  [draw opacity=0] (8.93,-4.29) -- (0,0) -- (8.93,4.29) -- cycle    ;

            \draw (833,437) node [anchor=north west][inner sep=0.75pt]   [align=left] [font=\Large] {Run$\displaystyle \ \mathsf{HHE.Enc}$};

            \draw (833,495) node [anchor=north west][inner sep=0.75pt]   [align=left] [font=\Large] {$\displaystyle \mathsf{K_{i}} ,\ c_{x_i},\ c_\mathsf{K_i}$};

            \draw    (350,525) -- (744.5,525) ;
            \draw [shift={(350,525)}, rotate = 0] [fill={rgb, 255:red, 0; green, 0; blue, 0 }  ][line width=0.08]  [draw opacity=0] (8.93,-4.29) -- (0,0) -- (8.93,4.29) -- cycle    ;

            \draw (406,505) node [anchor=north west][inner sep=0.75pt]  [font=\Large]  {$m_{1} =\ \langle t_{1}, c_{x_i},\ \ c_\mathsf{K_i} ,\ \sigma _{u_{i}}(\mathsf{H}( t_{1} \ || \ c_{x_i}\ || \ c_\mathsf{K_i})) \rangle $}; 

            \draw    (350,565) -- (401.5,565) ;
			\draw    (401.5,565) -- (401.5,586) ;
			\draw    (360.5,586) -- (401.5,586) ;
			\draw (402.5,570) node [anchor=north west][inner sep=0.75pt]   [align=left] [font=\Large] {Run$\displaystyle \ \mathsf{HHE.Decomp}$};

            \draw [shift={(360.5,586)}, rotate = 0] [fill={rgb, 255:red, 0; green, 0; blue, 0 }  ][line width=0.08]  [draw opacity=0] (8.93,-4.29) -- (0,0) -- (8.93,4.29) -- cycle    ; 
			\draw   (350.5,586) -- (363.5,586) -- (363.5,619) -- (350.5,619) -- cycle ;      

            \draw  [dash pattern={on 0.84pt off 2.51pt}]  (363.5,619) -- (401.5,619) ;
			\draw  [dash pattern={on 0.84pt off 2.51pt}]  (401.5,619) -- (401.5,640) ;
			\draw  [dash pattern={on 0.84pt off 2.51pt}]  (350.5,640) -- (401.5,640) ;

            \draw [shift={(350.5,640)}, rotate = 0] [fill={rgb, 255:red, 0; green, 0; blue, 0 }  ][line width=0.08]  [draw opacity=0] (8.93,-4.29) -- (0,0) -- (8.93,4.29) -- cycle    ; 

            \draw (404,622) node [anchor=north west][inner sep=0.75pt]   [align=left] [font=\Large] {$\displaystyle c'_{x_i}\ $}; 

            \draw    (117,565) -- (178.5,565) ; 
			\draw    (178.5,565) -- (178.5,586) ;
			\draw    (132.5,586) -- (178.5,586) ;
			\draw [shift={(129.5,586)}, rotate = 0] [fill={rgb, 255:red, 0; green, 0; blue, 0 }  ][line width=0.08]  [draw opacity=0] (8.93,-4.29) -- (0,0) -- (8.93,4.29) -- cycle    ;
			\draw   (116.5,586) -- (129.5,586) -- (129.5,619) -- (116.5,619) -- cycle ;
			\draw  [dash pattern={on 0.84pt off 2.51pt}]  (128.5,619) -- (180.5,619) ;
			\draw  [dash pattern={on 0.84pt off 2.51pt}]  (180.5,619) -- (180.5,640) ;
			\draw  [dash pattern={on 0.84pt off 2.51pt}]  (120.5,640) -- (179.5,640) ;
			\draw [shift={(117.5,640)}, rotate = 0] [fill={rgb, 255:red, 0; green, 0; blue, 0 }  ][line width=0.08]  [draw opacity=0] (8.93,-4.29) -- (0,0) -- (8.93,4.29) -- cycle    ;

            \draw (187,570) node [anchor=north west][inner sep=0.75pt]   [align=left] [font=\Large] {Run$\displaystyle \ \mathsf{HE.Enc}$};

            \draw (187,622) node [anchor=north west][inner sep=0.75pt]   [align=left] [font=\Large] {$\displaystyle (c_{w}, c_{b})$};

            \draw    (117,680) -- (332.5,680) ;
            \draw [shift={(332.5,680)}, rotate = 180] [fill={rgb, 255:red, 0; green, 0; blue, 0 }  ][line width=0.08]  [draw opacity=0] (8.93,-4.29) -- (0,0) -- (8.93,4.29) -- cycle    ; 

            \draw (120,660.4) node [anchor=north west][inner sep=0.75pt]  [font=\Large]  {$m_{2}=\ \langle t_{2},\ (c_{w}, c_{b}),\ \sigma _{A}( \mathsf{H}( t_{2} \ || \ (c_{w}, c_{b}))) \rangle $}; 

            \draw    (350,690) -- (401.5,690) ;
			\draw    (401.5,690) -- (401.5,711) ;
			\draw    (360.5,711) -- (401.5,711) ;
			\draw (402.5,695) node [anchor=north west][inner sep=0.75pt]   [align=left] [font=\Large] {Run$\displaystyle \ \mathsf{HHE.Eval}$};

            \draw [shift={(360.5,711)}, rotate = 0] [fill={rgb, 255:red, 0; green, 0; blue, 0 }  ][line width=0.08]  [draw opacity=0] (8.93,-4.29) -- (0,0) -- (8.93,4.29) -- cycle    ; 
			\draw   (350.5,711) -- (363.5,711) -- (363.5,744) -- (350.5,744) -- cycle ;      

            \draw  [dash pattern={on 0.84pt off 2.51pt}]  (363.5,744) -- (401.5,744) ;
			\draw  [dash pattern={on 0.84pt off 2.51pt}]  (401.5,744) -- (401.5,765) ;
			\draw  [dash pattern={on 0.84pt off 2.51pt}]  (350.5,765) -- (401.5,765) ;

            \draw [shift={(350.5,765)}, rotate = 0] [fill={rgb, 255:red, 0; green, 0; blue, 0 }  ][line width=0.08]  [draw opacity=0] (8.93,-4.29) -- (0,0) -- (8.93,4.29) -- cycle    ; 

            \draw (402.5,750) node [anchor=north west][inner sep=0.75pt]   [align=left] [font=\Large] {$\displaystyle c_{res}$}; 

            \draw    (350,795) -- (512,795) ;
            \draw [shift={(512,795)}, rotate = 180] [fill={rgb, 255:red, 0; green, 0; blue, 0 }  ][line width=0.08]  [draw opacity=0] (8.93,-4.29) -- (0,0) -- (8.93,4.29) -- cycle    ;

            \draw (412,777) node [anchor=north west][inner sep=0.75pt]   [align=left] [font=\Large] {$\displaystyle c_{res} \ $};

            \draw    (530,820) -- (591.5,820) ;
			\draw    (591.5,820) -- (591.5,841) ;
			\draw    (545.5,841) -- (591.5,841) ;
			\draw (593.5,825) node [anchor=north west][inner sep=0.75pt]   [align=left] [font=\Large] {Run$\displaystyle \ \mathsf{HHE.Dec}$};

            \draw [shift={(543.5,841)}, rotate = 0] [fill={rgb, 255:red, 0; green, 0; blue, 0 }  ][line width=0.08]  [draw opacity=0] (8.93,-4.29) -- (0,0) -- (8.93,4.29) -- cycle    ; 
			
            \draw   (529.5,841) -- (542.5,841) -- (542.5,874) -- (529.5,874) -- cycle ;      

            \draw  [dash pattern={on 0.84pt off 2.51pt}]  (541.5,874) -- (591.5,874) ;
			\draw  [dash pattern={on 0.84pt off 2.51pt}]  (591.5,874) -- (591.5,895) ;
			\draw  [dash pattern={on 0.84pt off 2.51pt}]  (529.5,895) -- (591.5,895) ;
   
			\draw [shift={(529.5,895)}, rotate = 0] [fill={rgb, 255:red, 0; green, 0; blue, 0 }  ][line width=0.08]  [draw opacity=0] (8.93,-4.29) -- (0,0) -- (8.93,4.29) -- cycle    ; 
			
			\draw (593.5,879) node [anchor=north west][inner sep=0.75pt]   [align=left] [font=\Large] {$\displaystyle res \ $};

            \draw    (530,910) -- (591.5,910) ;
			\draw    (591.5,910) -- (591.5,930) ;
			\draw    (545.5,930) -- (591.5,930) ;
			\draw (593.5,913) node [anchor=north west][inner sep=0.75pt]   [align=left] [font=\Large] {Run$\displaystyle \ \mathsf{PKE.Enc}$};

            \draw [shift={(543.5,930)}, rotate = 0] [fill={rgb, 255:red, 0; green, 0; blue, 0 }  ][line width=0.08]  [draw opacity=0] (8.93,-4.29) -- (0,0) -- (8.93,4.29) -- cycle    ; 
			
            \draw   (529.5,930) -- (542.5,930) -- (542.5,963) -- (529.5,963) -- cycle ;      

            \draw  [dash pattern={on 0.84pt off 2.51pt}]  (541.5,963) -- (591.5,963) ;
			\draw  [dash pattern={on 0.84pt off 2.51pt}]  (591.5,963) -- (591.5,984) ;
			\draw  [dash pattern={on 0.84pt off 2.51pt}]  (529.5,984) -- (591.5,984) ;
   
			\draw [shift={(529.5,984)}, rotate = 0] [fill={rgb, 255:red, 0; green, 0; blue, 0 }  ][line width=0.08]  [draw opacity=0] (8.93,-4.29) -- (0,0) -- (8.93,4.29) -- cycle    ; 
			
			\draw (593.5,963) node [anchor=north west][inner sep=0.75pt]   [align=left] [font=\Large] {$\displaystyle c'_{res} \ $};

            \draw    (117,1010) -- (512.5,1010) ;
            \draw [shift={(117,1010)}, rotate = 0] [fill={rgb, 255:red, 0; green, 0; blue, 0 }  ][line width=0.08]  [draw opacity=0] (8.93,-4.29) -- (0,0) -- (8.93,4.29) -- cycle    ;
            
            \draw (196,990) node [anchor=north west][inner sep=0.75pt]  [font=\Large]  {$m_{3} =\ \langle t_{3} ,\ c'_{res},\ \sigma _{T}(\mathsf{H}( t_{3} \ || \ c'_{res}\ )) \rangle $}; 
            

            \draw    (117,1025) -- (178.5,1025) ; 
			\draw    (178.5,1025) -- (178.5,1045) ;
			\draw    (132.5,1045) -- (178.5,1045) ;
			\draw [shift={(129.5,1045)}, rotate = 0] [fill={rgb, 255:red, 0; green, 0; blue, 0 }  ][line width=0.08]  [draw opacity=0] (8.93,-4.29) -- (0,0) -- (8.93,4.29) -- cycle    ;
			\draw   (116.5,1045) -- (129.5,1045) -- (129.5,1078) -- (116.5,1078) -- cycle ;
			\draw  [dash pattern={on 0.84pt off 2.51pt}]  (128.5,1078) -- (180.5,1078) ;
			\draw  [dash pattern={on 0.84pt off 2.51pt}]  (180.5,1078) -- (180.5,1099) ;
			\draw  [dash pattern={on 0.84pt off 2.51pt}]  (120.5,1099) -- (179.5,1099) ;
			\draw [shift={(117.5,1099)}, rotate = 0] [fill={rgb, 255:red, 0; green, 0; blue, 0 }  ][line width=0.08]  [draw opacity=0] (8.93,-4.29) -- (0,0) -- (8.93,4.29) -- cycle    ;

            \draw (187,1030) node [anchor=north west][inner sep=0.75pt]   [align=left] [font=\Large] {Run$\displaystyle \ \mathsf{PKE.Dec}$};

            \draw (187,1085) node [anchor=north west][inner sep=0.75pt]   [align=left] [font=\Large] {$\displaystyle res$};
            
		\end{tikzpicture}
	}
	\caption{$\mathsf{\trustedPervPPML}$}
	\label{fig:trustedppml}
\end{figure}

\subsection{PervPPML: TEE Setting}
\label{subsec:tparty}
\noindent \textbf{\textit{High-Level Overview:}} \enskip While we deem \threePervPPMLTTT{} appropriate for a multi-client model with \textbf{A} as the model owner, it is important to acknowledge certain security vulnerabilities when considering a zero-trust environment or when contemplating the potential collusion of entities aiming to undermine our protocol. In light of these concerns, we present an enhanced protocol, \trustedPervPPMLTTT{}, a trusted variant of \threePervPPMLTTT{} specifically designed for the multi-client model that offers comprehensive trust via a \textbf{TEE}. By incorporating a TEE, we fortify the security of our system and ensure its resilience against adversarial collaboration. \trustedPervPPMLTTT{} consists of four parties: a set of users $\mathcal{U}$, \textbf{CSP}, \textbf{TEE} and an analyst \textbf{A}. The protocol's steps are similar to the steps described for \threePervPPMLTTT{} (\autoref{subsec:3party}) with a few modifications. For example, similar to the \threePervPPMLTTT{} setting, we assume that \textbf{A} is the owner of the ML model, while a user $u_i$ provides the data $x_i$. However, in \trustedPervPPMLTTT{}, the \textbf{TEE} is responsible for generating the required HHE keys ($\mathsf{pk_T, sk_T, evk_T}$). Subsequently, the \textbf{TEE} publishes $\mathsf{pk_T}$, and transfers $\mathsf{evk_T}$ to the \textbf{CSP}. The upload phase remains the same as in the \threePervPPMLTTT{} setting. For the evaluation phase, \textbf{A} requests a prediction to be performed on the stored $c'_{x_i}$ by sending a homomorphically encrypted pre-trained ML model parameters $(c_{w}, c_{b})$ to the \textbf{CSP}. On receiving this request, the \textbf{CSP} transforms $c_{x_i}$ into a homomorphic ciphertext $c'_{x_i}$ using $\mathsf{evk_T}$ and $c_{\mathsf{K_{i}}}$. The \textbf{CSP} then produces an encrypted result $c_{res}$ using $(c_{w}, c_{b})$, $c'_{x_i}$ and $\mathsf{evk_T}$ and transfers the result to the \textbf{TEE}. The \textbf{TEE} retrieves the prediction result, $res$, and then encrypts it with \textbf{A}'s public key. The resulting ciphertext is then returned to \textbf{A} for the final decryption.

\noindent\textbf{\textit{Formal Construction:}} 
\autoref{fig:trustedppml} provides a complete overview of the \trustedPervPPMLTTT{} protocol.

\noindent\framebox{$\mathsf{\mathbf{\trustedPervPPML.Setup}}$:}
Each party generates their respective signing and verification key pair for the signature scheme $\sigma$ and publishes the verification keys. 
\textbf{TEE} runs $\mathsf{HHE.KeyGen}$ to output the public, private and evaluation keys ($\mathsf{pk_{T}, sk_{T}, evk_{T}}$). On successful completion, \textbf{TEE} publishes $\mathsf{pk_{T}}$ and transfers $\mathsf{evk_{T}}$ to the \textbf{CSP}. The steps of this phase are summarized below:

\begin{myframe}{}
\begin{small}
\textbf{\underline{$\mathbf{u_i}$ computes}:}

$\mathsf{(pk_{u_{i}}, sk_{u_{i}})} \leftarrow \mathsf{PKE.Gen(1^{\lambda})}$

\medskip

\textbf{\underline{CSP computes}:}

$\mathsf{(pk_{CSP}, sk_{CSP})} \leftarrow \mathsf{PKE.Gen(1^{\lambda})}$

\medskip

\textbf{\underline{TEE computes}:}

	$\mathsf{(ver_{T}, sign_{T})} \leftarrow \mathsf{PKE.Gen(1^{\lambda})}$,\\
	$\mathsf{(pk_T, sk_T, evk_T)} \leftarrow \mathsf{HHE.KeyGen(1^{\lambda})}$ 

\medskip

\textbf{\underline{A computes}:}

	$\mathsf{(pk_{A}, sk_{A})} \leftarrow \mathsf{PKE.Gen(1^{\lambda})}$
\end{small}
\end{myframe}

\noindent\framebox{$\mathsf{\mathbf{\trustedPervPPML.Upload}}$:}
For this phase, we assume the existence of multiple users, $\mathcal{U} = \{u_1, \ldots, u_n\}$. Each $u_i$ runs $\mathsf{HHE.Enc}$ algorithm, which comprises the $\mathsf{SKE.Gen}$, $\mathsf{SKE.Enc}$, and $\mathsf{HE.Enc}$ algorithms. To proceed, $u_i$ first generates a unique symmetric key $\mathsf{K_{i}}$ using $\mathsf{SKE.Gen}$. 
Subsequently, $u_i$ encrypts their data by running $\mathsf{SKE.Enc}$ with inputs $\mathsf{K_i}$ and $x_{i}$, and outputs $c_{x_i}$. Furthermore, $u_i$ homomorphically encrypts the symmetric key $\mathsf{K_{i}}$ using $\mathsf{HE.Enc}$ with inputs $\mathsf{pk_T}$ and $\mathsf{K_i}$, and outputs $c_\mathsf{K_{i}}$. And finally, $u_i$ sends the encrypted values ($c_{x_i}$ and $c_\mathsf{K_i}$) via $m_{1}$ (\autoref{fig:trustedppml}) to the \textbf{CSP}. The steps of this phase are summarized below:

\begin{small}
\begin{gather*}
    \mathsf{K_{i}} \leftarrow \mathsf{SKE.Gen(1^{\lambda})},  \\
        \mathsf{SKE.Enc(K_i},x_{i}) \rightarrow c_{x_i},  \\
	c_\mathsf{K_{i}} \leftarrow \mathsf{HE.Enc(pk_T,\mathsf{K_i})}\\
     m_{1} = \langle t_{1}, c_{x_i}, c_{\mathsf{K_{i}}}, \sigma_{u_{i}}(\mathsf{H}(t_{1} || c_{x_i} || c_{\mathsf{K_{i}}}))\rangle
\end{gather*}    
\end{small}

\noindent\framebox{$\mathsf{\mathbf{\trustedPervPPML.Eval}}$:}
The secure evaluation phase is initiated by \textbf{A} to gain insight into the data provided by an arbitrary number of users. To begin, \textbf{A} homomorphically encrypts their ML model parameters $(w, b)$ by running  $\mathsf{HE.Enc}$ with inputs $\mathsf{pk_A}$ and $(w, b)$, and outputs encrypted model $(c_{w}, c_{b})$. \textbf{A} then sends the encrypted parameters via $m_{2}$ to the \textbf{CSP} (\autoref{fig:trustedppml}). On receiving the encrypted model, the \textbf{CSP} transforms $c_{x_i}$ into homomorphic ciphertext $c'_{x_i}$ by executing $\mathsf{HHE.Decomp}$ which takes as input $\mathsf{evk_T}, c_{x_i},$ and  $c_\mathsf{K_i}$, and outputs $c'_{x_i}$. Subsequently, \textbf{CSP} runs  $\mathsf{HHE.Eval}$ with inputs $(\mathsf{evk_{T}}, (c_{w}, c_{b}), c'_{x_i})$ to output an encrypted prediction $c_{res}$. This phase is summarized below:

\begin{myframe}{}
\begin{small}
\textbf{\underline{A computes}:}
        
	$(c_{w}, c_{b}) \leftarrow \mathsf{HE.Enc(pk_T}, (w, b))$
\medskip 
 
\textbf{\underline{CSP computes}:}

	$c'_{x_i} \leftarrow \mathsf{HHE.Decomp(evk_T}, c_{x_i}, c_\mathsf{K_i})$, \\
	$c_{res} \leftarrow \mathsf{HHE.Eval(evk_{T}}, (c_{w}, c_{b}), c'_{x_i})$
\end{small}    
\end{myframe}

\begin{equation*}
    m_{2} = \langle t_{2}, (c_{w}, c_{b}), \sigma_{A}(\mathsf{H}(t_{2} || (c_{w}, c_{b})))\rangle
\end{equation*}

\noindent\framebox{$\mathsf{\mathbf{\trustedPervPPML.Classify}}$:} 
To commence the classification phase, \textbf{CSP} transfers the encrypted prediction result, $c_{res}$, to the \textbf{TEE} for secure decryption and re-encryption. On receiving $c_{res}$, \textbf{TEE} executes $\mathsf{HHE.Dec}$ with inputs $\mathsf{sk_{T}}$ and $c_{res}$, to output the prediction results $res$. The prediction result is then encrypted with \textbf{A}'s public key using $\mathsf{PKE.Enc}$, and the resulting ciphertext $c'_{res}$ sent to \textbf{A} via $m_3$ (\autoref{fig:trustedppml}). After \textbf{A} receives the message, they will decrypt the ciphertext with their private key using $\mathsf{PKE.Dec}$ and receive the prediction result $res$. The steps of this phase are summarized below:

\begin{myframe}{}
        \begin{small}
            \textbf{\underline{TEE computes}:}

                $\mathsf{HHE.Dec(sk_T}, c_{res}) \rightarrow res$, \\
                $c'_{res} \leftarrow \mathsf{PKE.Enc(pk_A}, res)$

                \medskip
            \textbf{\underline{A computes}:}
            
                $\mathsf{PKE.Dec(sk_A}, c'_{res}) \rightarrow res$

        \end{small}        
\end{myframe}
\begin{equation*}
    m_{3} = \langle t_{3}, c'_{res}, \sigma_{T}(\mathsf{H}(t_{3} || c'_{res}))\rangle
\end{equation*}

\section{Threat Model and Security Analysis}
\label{sec:Threat&SecAnal}
In this section, we define the threat model 
used to prove the security of \protocolTTT{} by formalizing the capabilities of an adversary $\mathcal{ADV}$. The PASTA~\cite{dobraunig2021pasta} scheme we adopt as our underlying cryptographic scheme has been proven to be resilient against differential and linear statistical attacks and their variations. The cipher construction incorporates changing linear layers during encryption, which ensures defence against statistical attacks. The authors provide a rigorous proof that this layer instantiation is secure, ensuring full diffusion throughout the entire scheme, even in the best-case scenario for an attacker~\cite{dobraunig2021pasta}. PASTA is also secure against algebraic attacks, such as Linearization and Gröbner Basis Attacks. This security is achieved through proper parameter selection, including choosing prime numbers larger than 216 and selecting sufficient input words to achieve 128-bit security. These parameter choices lead to a substantial number of monomials, making solving required equations computationally complex and improbable.

Following the threat model presented in~\cite{10.1145/3605098.3635983} we assume that \trustedPervPPMLTTT{} is secure against the attacks identified in~\cite{10.1145/3605098.3635983} due to the core construction being related to \threePervPPMLTTT{}. However, we present an additional attack and prove the security of 
\threePervPPMLTTT{} and \trustedPervPPMLTTT{} against the Data Reconstruction Attack. 

\begin{myAttack}[Data Reconstruction Attack]

	Let $\mathcal{ADV}$ be a malicious adversary. $\mathcal{ADV}$ successfully launches the data reconstruction attack if she manages to compromise the \textbf{CSP} and reconstruct the users data via the inference result $res$ and the underlying ML model $f$.
\end{myAttack}

\subsection{Security Analysis}\label{subsec:SecAnal}

We now prove the security of our 
protocols in the presence of the adversary defined earlier. 

\begin{proposition}[Data Reconstruction Attack Soundness]
	Let $res$ be the inference result of a multi-layered ML model $f$, $\mathsf{HE}$ a semantically secure homomorphic encryption scheme and $\mathsf{PKE}$ an IND-CCA2 public key encryption scheme. We assume that $\mathcal{ADV}$ can corrupt the \textbf{CSP} in \texttt{PervPPML}, hence gaining access to $(c_w, c_b)$ and $c_{res}$. Then $\mathcal{ADV}$ cannot successfully launch the data reconstruction attack on \texttt{PervPPML}.

\end{proposition}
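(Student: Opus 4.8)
The plan is to prove the statement by a reduction to the semantic security of the underlying $\mathsf{HE}$ scheme, formalized through a simulation argument. First I would pin down precisely what a corrupted \textbf{CSP} actually observes during a complete run of \protocolTTT{}: the evaluation key $\mathsf{evk_T}$, the uploaded pair $(c_{x_i}, c_{\mathsf{K_i}})$, the transciphered ciphertext $c'_{x_i}$ produced by $\mathsf{HHE.Decomp}$, the encrypted model $(c_w, c_b)$, and the encrypted prediction $c_{res}$. The crucial structural observation, which I would state up front, is that the \textbf{CSP} holds \emph{no} decryption key -- neither $\mathsf{sk_T}$ nor $\mathsf{sk_A}$ ever leaves the \textbf{TEE} or \textbf{A}, respectively -- so every quantity in its view is a ciphertext under either $\mathsf{HE}$ or $\mathsf{PKE}$. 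Since the data reconstruction attack requires the plaintext inference result $res$ together with the plaintext model $f$ as its two inputs, it suffices to argue that neither of these plaintexts is recoverable from the \textbf{CSP}'s view.

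The core of the argument is a hybrid/simulation step. I would construct a simulator $\mathcal{S}$ that produces a view for $\mathcal{ADV}$ in which the genuine ciphertexts $(c_w, c_b)$ and $c_{res}$ are replaced by $\mathsf{HE}$ encryptions of fixed dummy plaintexts (e.g.\ zero vectors of matching dimension), and similarly $c'_{x_i}$ and $c_{\mathsf{K_i}}$ are replaced by encryptions of dummies. I would then argue through a sequence of hybrids that each replacement is computationally undetectable: any adversary distinguishing two consecutive hybrids with non-negligible advantage can be converted into a distinguisher against the semantic security of $\mathsf{HE}$ by embedding the challenge ciphertext in the corresponding slot and forwarding $\mathcal{ADV}$'s guess. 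Because the final, fully simulated view is statistically independent of the real $x_i$, $(w,b)$ and $res$, any reconstruction $\mathcal{ADV}$ outputs there is independent of the true user data; transporting this conclusion back across the indistinguishable hybrids shows that $\mathcal{ADV}$'s success probability in the real game exceeds trivial guessing only by a negligible amount. The $c'_{res}$ leg of \trustedPervPPMLTTT{} is handled analogously, invoking the IND-CCA2 security of $\mathsf{PKE}$ rather than the semantic security of $\mathsf{HE}$.

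The main obstacle I anticipate is conceptual rather than computational: a data reconstruction (model-inversion) attack is genuinely feasible in the \emph{plaintext} regime, so the proof must make explicit that the barrier is the adversary's inability ever to cross from ciphertext to plaintext, and not any inherent non-invertibility of $f$. Concretely, I must rule out that the combination $c_{x_i}, c_{\mathsf{K_i}}$ -- a symmetric ciphertext together with an $\mathsf{HE}$ encryption of its key -- leaks $x_i$ to a key-less \textbf{CSP}; this reduces to the semantic security of $\mathsf{HE}$ applied to $c_{\mathsf{K_i}}$ together with the security of the $\mathsf{SKE}$ cipher on $c_{x_i}$, and I would note that $c'_{x_i}$, being itself an $\mathsf{HE}$ encryption of $x_i$, is already covered by the same semantic-security argument. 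A secondary subtlety is keeping the reduction meaningful across the multi-user setting $\mathcal{U} = \{u_1,\dots,u_n\}$: I would address this with a standard hybrid over users, incurring only a factor-$n$ loss in advantage, which remains negligible.
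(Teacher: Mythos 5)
Your proposal is correct in substance and reaches the same conclusion, but it is a genuinely more formal route than what the paper does. The paper's proof is a short two-case direct argument, one case per protocol variant: in \threePervPPMLTTT{} the corrupted \textbf{CSP} holds only $(c_w,c_b)$ and $c_{res}$, and semantic security of $\mathsf{HE}$ makes decrypting either of them negligible; in \trustedPervPPMLTTT{} the adversary additionally eavesdrops $m_3$ to obtain $c'_{res}$, and IND-CCA2 security of $\mathsf{PKE}$ makes decrypting that negligible. Since the attack, as defined, requires the plaintexts $res$ and $(w,b)$, it fails. Your simulation-plus-hybrids argument formalizes and strengthens this: it shows the \emph{entire} view of the corrupted \textbf{CSP} (including the upload pair $(c_{x_i},c_{\mathsf{K_i}})$ and the multi-user setting) is computationally independent of $x_i$, $(w,b)$ and $res$, which the paper never argues; the price is that you need an assumption the proposition does not state, namely the security of the $\mathsf{SKE}$ cipher (PASTA) for the $c_{x_i}$ leg, whereas the paper sidesteps this by arguing only about the two ciphertexts named in the attack definition. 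One structural wrinkle you should fix in a full write-up: $c'_{x_i}$ and $c_{res}$ are not transmitted to the \textbf{CSP} but are computed \emph{by} the corrupted \textbf{CSP} itself via $\mathsf{HHE.Decomp}$ and $\mathsf{HHE.Eval}$, so a reduction cannot embed a challenge ciphertext in those ``slots'' or swap them out in separate hybrid steps; the hybrids may only replace the genuinely transmitted ciphertexts ($c_{x_i}$, $c_{\mathsf{K_i}}$, $(c_w,c_b)$, and $c'_{res}$ in the TEE variant), after which the derived ciphertexts are automatically consistent with the dummies. This is a repairable imprecision, not a fatal gap, and with that correction your argument is a rigorous superset of the paper's.
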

\begin{proof}
To successfully launch the Data Reconstruction attack, $\mathcal{ADV}$ has two options:
\begin{enumerate}
\item {{\color{magenta} ({\threePervPPMLTTT{}})}:} In this option of the attack, $\mathcal{ADV}$ gains access to $c_{res}$, the HE encrypted inference result, and $(c_w, c_b)$, the HE encrypted parameters of the multi-layered ML model $f$, owned by \textbf{A}. To this end, $\mathcal{ADV}$ successfully launches this attack if she successfully retrieves $f$ in the form of $(w, b)$ from $(c_w, c_b)$ and $res$ from $c_{res}$. $\mathsf{HE}$ has been proven to be semantically secure; hence, the likelihood of $\mathcal{ADV}$ successfully decrypting the encrypted inference result $c_{res}$ or the encrypted parameters $(c_w, c_b)$ is considered negligible.

\item {{\color{magenta} ({\trustedPervPPMLTTT{}})}:} In this variant of the attack, $\mathcal{ADV}$ eavesdrops on $m_{3} = \langle t_{3}, c'_{res}, \sigma_{T}(\mathsf{H}(t_{3} || c'_{res}))\rangle$ sent to \textbf{A} to obtain $c'_{res}$, the PKE encrypted inference result. As discussed previously, HE is considered semantically secure; hence, the likelihood of $\mathcal{ADV}$ successfully decrypting $c_{res}$ and $(c_w, c_b)$ is negligible. Additionally, the chosen $\mathsf{PKE}$ scheme is proven to be IND-CCA2 secure; therefore, the likelihood of $\mathcal{ADV}$ successfully decrypting $c'_{res}$ is also considered negligible.
\end{enumerate}
Due to the negligible probability of obtaining $res$ and $(w, b)$ in \texttt{PervPPML}, this attack fails.
\end{proof}


\subsection{TEE Security}
\label{subsec:teesecurity}
There are various hardware vendors that offer unique implementations of TEEs to enhance system security. These include AMD's Secure Encrypted Virtualization (SEV), Intel's Software Guard Extensions (SGX), Intel's Trusted Domain Extensions (TDX), ARM's TrustZone, and RISC-V's Keystone~\cite{tee2014} 
Cryptographic primitives play a vital role in ensuring the confidentiality and integrity of the TEE throughout its lifecycle. The attestation functionality of TEEs guarantees that they are in a known good state before loading any code. Signed binaries ensure that only approved code is loaded, while encrypted and integrity-checked data safeguard it against unauthorized access or tampering by untrusted parties~\cite{tee2015}. These measures are tied to the CPU, protecting the TEE from the main operating system, potential attackers, and even the user. Note that even the legitimate user of a system can be considered a potential adversary in some instances, such as Digital Rights Management (DRM) or specific banking operations, where subverting the system's normal operation could lead to financial gain. The separation of the TEE from the Rich Execution Environment (REE) enables more rigorous verification to be applied to the security-critical components of a system. This segregation allows for focused scrutiny and ensures that the essential parts responsible for security receive thorough evaluation and protection~\cite{tee2014}. Hence, by utilizing a \textbf{TEE} in \trustedPervPPMLTTT{}, we provide a secure and isolated region to generate HHE keys and perform HHE decryptions that are safe from even a malicious \textbf{CSP}. 

\section{Performance Analysis}
\label{sec:performanceanalysis}
In this section, we first extensively evaluate the performance of the 
\threePervPPMLTTT{} protocol, focusing on their computational and communication overhead. To provide concrete evidence of the efficiency of the protocols, we then compared the performance of it 
against a basic BFV HE scheme~\autoref{subsec:commparison}. For these evaluations, we utilized a dummy dataset where each data input was a vector of four random integers, and the weights and biases were also integer vectors of length four. In the next phase of our experiments, ~\autoref{subsec:MLApp}, we reported the process and results of building a privacy-preserving inference protocol on sensitive medical data (ECG) based on the \threePervPPMLTTT{} construction. Our primary testbed was a commercial desktop with a 12th Generation Intel i7-12700 CPU with~20 cores and~32GB of RAM running on an Ubuntu 20.04 operating system. Furthermore, in all the evaluations, we utilized the SEAL cryptographic library\footnote{\url{https://github.com/microsoft/SEAL}} for basic HE operations and PASTA library\footnote{\url{https://github.com/IAIK/hybrid-HE-framework}} to implement the secure symmetric cipher. 
The polynomial modulus was set at~16,384 for all HE operations to accommodate the complexity of the protocols operations. To ensure statistical significance, each experiment was repeated 50 times, with the average results considered to provide a comprehensive overview of the performance of each algorithm under evaluation.


\subsection{Computational Analysis}
\label{subsec:computation}
This subsection focused on the computational performance of the core algorithms executed by each entity in the 
\threePervPPMLTTT{} protocol. More precisely, we measured the time taken to execute each HHE algorithm in each protocol phase by the responsible party. For the 
{\texttt{\threePervPPML.Setup}} phase, we observed the time taken to generate a set of HHE keys $u_i$ and \textbf{A}, respectively. For implementation purposes, the $\mathsf{HHE.KeyGen}$ algorithm involved the generation of encryption parameters $\mathsf{Params}$, secret key $\mathsf{sk}$, public key $\mathsf{pk}$, relinkey $\mathsf{rk}$, and a Galois key $\mathsf{gk}$, and took~243 milliseconds to execute. Subsequently, in the 
{\texttt{\threePervPPML.Upload}} phase, we measured the time taken to homomorphically encrypt a symmetric key $\mathsf{K}$ using the $\mathsf{HE.Enc}$ algorithm and the time taken to execute the $\mathsf{SKE.Enc}$ algorithm for various data inputs ranging from~1 to~300 (each data input is an integer vector of length 4). $\mathsf{HE.Enc}$ took~7 milliseconds to run. For a single data input, $\mathsf{SKE.Enc}$ executed in~2 milliseconds and~600 milliseconds for~300 data inputs (\autoref{tab:execution}). 

When evaluating \texttt{\threePervPPML.Eval}, we first measured the performance of the $\mathsf{HE.Enc}$ algorithm at \textbf{A} and then the performance of the $\mathsf{HHE.Decomp}$ and $\mathsf{HHE.Eval}$ algorithms on various number of symmetric ciphertext inputs from~1 to~300 at \textbf{CSP}. $\mathsf{HE.Enc}$ took~16 milliseconds to execute, while the results for $\mathsf{HHE.Decomp}$ were 3595.6 seconds. 
For a single input, $\mathsf{HHE.Eval}$ took~38 milliseconds to execute and~11.6 seconds for~300 inputs.


\begin{table}[ht!]
	\centering
		\begin{tabular}{|c|c|c|c|c|}
			\hline
			\rowcolor{frenchblue}		
			\color{white}{Inputs} & \color{white}{$\mathsf{SKE.Enc}$} & \color{white}{$\mathsf{HHE.Decomp}$}  & \color{white}{$\mathsf{HHE.Eval}$} & \color{white}\textbf{$\mathsf{HHE.Dec}$} \\
			\hline 
			1 &  2 ms  &  11.9 s & 0.038 s & 3 ms\\
			\hline
			\rowcolor{Gray}
			50 &  100 ms  &  599.1 s & 1.96 s & 150 ms \\
			\hline
			100 &  200 ms  & 1197.7 s & 3.93 s & 300 ms \\
			\hline
			\rowcolor{Gray}
			150 &  300 ms  & 1794.5 s & 5.77 s & 450 ms \\
			\hline
			200 & 400 ms & 2394.2 s & 7.69 s & 600 ms\\
			\hline
			\rowcolor{Gray}
			250 & 500 ms & 2989.2 s & 9.61 s & 750 ms\\
			\hline
			300 & 600 ms & 3595.6 s & 11.61 s & 900 ms\\
			\hline
	\end{tabular}
	\caption{Computational Analysis \label{tab:execution}}
\end{table}
\noindent 
Finally, in {\texttt{\threePervPPML.Classify}} phase, we focused primarily on measuring the cost of the $\mathsf{HHE.Dec}$ algorithm for a range of homomorphic ciphertext inputs from~1 to~300. For a single input, $\mathsf{HHE.Dec}$ ran in~3 milliseconds, and for~300 inputs, it ran in~900 milliseconds (\autoref{tab:execution}). 

\subsection{Communication Cost}
\label{subsec:comm2}
To measure each protocol's communication costs, we consider the size of the data transferred between parties partaking in the protocol. Using \threePervPPMLTTT{}, evaluations of the communication costs of it were based on the following observations: On completing the {\texttt{\threePervPPML.} \texttt{Setup}} phase, \textbf{A} sends $evk_i$ to the \textbf{CSP}. Subsequently, during the {\texttt{\threePervPPML.Upload}} phase, $u_i$ sends a homomorphically encrypted symmetric key $c_\mathsf{K}$ along with symmetrically encrypted data $c = (c_1, \ldots, c_n)$. To initiate the {\texttt{\threePervPPML.Eval}} phase, \textbf{A} first sends homomorphic ciphertexts of the ML weights $(c_w, c_b)$ to the \textbf{CSP}, which is approximately~3.65 MB and is constant irrespective of the number of data inputs which is the HE weight and bias vector. Then, \textbf{CSP} sends the resulting homomorphic ciphertexts $c_{res} = (c_{res_1}, \ldots, c_{res_n})$ back to \textbf{A}. \autoref{tab:comm2} provides an overview of the communication costs incurred by \protocolTTT{} for different data inputs.

\begin{table}[ht!]
	\centering
		\begin{tabular}{|c|c|c|c|c|c}
			\hline
			\rowcolor{frenchblue}		
			\color{white}{Data Inputs} & \color{white}{$\mathsf{c = (c_1, \ldots, c_n)}$} & \color{white}{$\mathsf{c_{res} = (c_{res_1}, \ldots, c_{res_n})}$} \\
			\hline 
			1 &  56 B & 1.83 MB\\
			\hline
			\rowcolor{Gray}
			50 & 2800 B & 91.3 MB\\
			\hline
			100 &  5600 B & 182.6 MB\\
			\hline
			\rowcolor{Gray}
			150 &  8400 B & 273.9 MB\\
			\hline
			200 & 11200 B & 365.2 MB\\
			\hline
			250 & 14000 B & 456.7 MB \\
			\hline
            \rowcolor{Gray}
            300 & 16800 B & 547.9 MB \\
			\hline
	\end{tabular}
	\caption{Communication Analysis -- \protocolTTT{}  \label{tab:comm2}}
\end{table}

\subsection{Comparison with plain BFV}
\label{subsec:commparison}
To provide concrete evidence of the efficiency of both our proposed protocols, we implemented a plain BFV scheme with a similar architecture to \threePervPPMLTTT{}~and compared the results. More precisely, we measured the performance of a plain BFV scheme, where a user continuously encrypts data input homomorphically before outsourcing them to the \textbf{CSP}. The same encryption parameters were used for all implementations. For these experiments, we only focused on comparing the total computational and communication costs of running the {\texttt{\threePervPPML.Upload}} phase of our protocol, with the cost of continuously using HE encryption in the plain BFV. As with all previous experiments, we vary the number of data inputs from~1 to~300 (each data input is an integer vector of length 4).


\noindent For a single data input, \texttt{\threePervPPML.Upload} took~9 milliseconds to execute, while the plain BFV scheme took~7 milliseconds to perform a single HE encryption. It is worth noting that the plain BFV scheme is marginally faster for a single data value. However, this is due to the fact that \texttt{\threePervPPML.Upload} involves two operations (a symmetric encryption operation and an HE encryption operation), while the plain BFV scheme involves just one HE encryption operation. However, when the number of data values was increased to 300, \texttt{\threePervPPML.Upload} ran in~0.608 seconds, while the plain BFV scheme ran in~2.2 seconds. 


\noindent Subsequently, we compared the communication expenses by measuring the total size of transferable ciphertext data in bytes from a user $u_i$ to \textbf{CSP}. Overall, \texttt{\threePervPPML.Upload} sent approximately~1.8 MB of ciphertext data for both a single input and~300 inputs. This is primarily because the size of symmetric ciphertext is almost negligible as compared to that of a homomorphic ciphertext (\autoref{tab:comm2}). The plain BFV, on the other hand, sent approximately~1.82 MB of ciphertext data for a single data input and~547.8 MB for~300 data inputs.
These results show, that \protocolTTT{} reduces both the communication and computational burden of $u_i$ and transfers them to \textbf{CSP}, paving the way for potentially implementing 
HHE on 
resource-constrained devices.


\subsection{PPML Application}
\label{subsec:MLApp}
In the final phase of our evaluations, we demonstrate the real-world applicability of our \threePervPPMLTTT{} protocol by applying it to a PPML application with a sensitive heartbeat dataset to classify whether a heartbeat is subjected to heart disease or not. 
More specifically, we employed the MIT-BIH dataset~\cite{moody2001impact}, which is a dataset of human heartbeats obtained from 47 subjects from 1975 to 1979 containing 48 half-hour excerpts of two-channel ECG recordings.


\subsubsection{MIT-BIH dataset}
We used the processed ECG data from~\cite{abuadbba2020can}, which contains a train split and a test split. Both splits are comprised of 13,245 ECG examples of 128 length, with values ranging from $[0, 1]$ that belong to five classes, labeled as N, R, L, A and V. For our experiments we simplify the task to a binary classification, where N is considered a normal heartbeat and  (L, R, A, V) are called "diseased heartbeats". 
We balance the dataset to contain an equivalent amount of training samples per class and quantized the values into 4-bit integers with values ranging from $[0, 15]$ (\autoref{fig:mitbih_float_int_signals}).

To compare the accuracy of the floating-point and integer datasets we trained a simple neural network with one Fully Connected (FC) layer with a sigmoid activation function. To train a model with quantized ECG data, we employed the PocketNN framework~\cite{song2022pocketnn}. Training in floating-point arithmetic produced the best train accuracy of 89.36\% at epoch 482 and the best test accuracy of 88.93\% at epoch 500. With respect to training in integer arithmetic, we got the best train accuracy of 86.65\% at epoch 42 and the best test accuracy of 87.06\% at epoch 31. From these results, we observed that the test accuracy of the integer neural network was only 1.87\% less than the test accuracy produced by the floating-point neural network. Following these results, we aimed to test our protocol \threePervPPMLTTT{} and the impacts of HHE on the inference phase. To do so we built \ecgPPML{} -- a privacy-preserving inference protocol on the quantized integer ECG data based on the same construction.

\begin{figure}[ht!]
	\centering
	\includegraphics[width=0.45\textwidth]{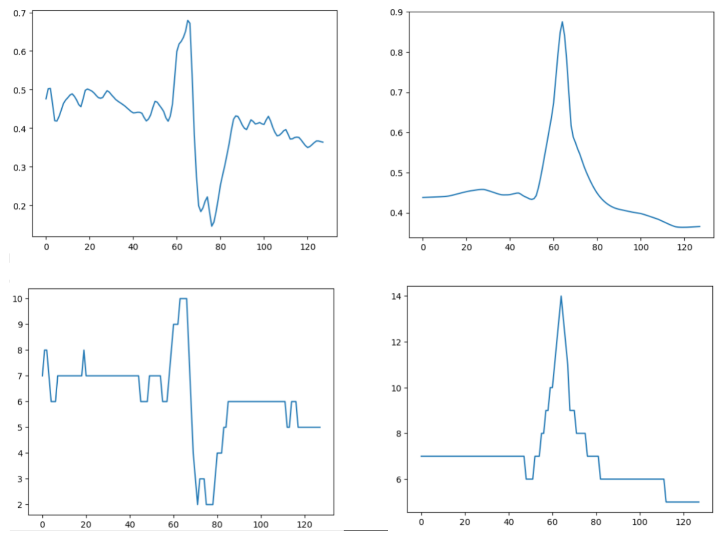}
	\caption{Quantizing ECG data into 4-bit integers. Top: Floating-point ECG data. Bottom: Corresponding quantized ECG data.}
	\label{fig:mitbih_float_int_signals}
\end{figure}

\noindent As a start, we ran the encrypted inference protocol on different numbers of examples from the test split, then compared the predictions with the ground-truth outputs to get the encrypted test accuracy. For each experiment, we also made inferences on plaintext ECG data in floating point and integer arithmetic to compare the results. 
\autoref{tab:ecgPPMLAcc} shows the encrypted and plaintext accuracies on different numbers of data input. Overall, the accuracy of plaintext inference in integer arithmetic is very similar to encrypted inference accuracies. When the number of input examples is low (1-500), integer arithmetic and encrypted accuracies are comparable or even higher than plaintext floating-point inference. When the number of data inputs increases (1000-2000 examples), plaintext integer and encrypted inference produces slightly lower accuracies (0.5-0.8\% lower). For 1000-2000 input data samples, encrypted inference had higher accuracy than plaintext integer inference but with a minimal margin (0.1-0.15\%). We reason that this is due to HE noise, which helps make a few more correct predictions.

\begin{table}[ht!]
	\centering
		\begin{tabular}{|c|c|c|c|c|c|c|}
			\hline
			\rowcolor{frenchblue}
			\color{white}{Data Inputs} & \color{white}{Plaintext (Float)} & \color{white}{Plaintext (Integer)} & \color{white}{~~Encrypted~~} \\
			\hline 
			1 &  100 \% & 100 \% & 100 \%\\
			\hline
			\rowcolor{Gray}
			10 & 90 \% & 90 \% & 90 \% \\
			\hline
			20 &  90 \% & 95 \% & 90 \%  \\
			\hline
			\rowcolor{Gray}
			50 &  88 \% & 92 \% & 90 \% \\
			\hline
			100 & 86 \% & 91 \% & 90 \% \\
			\hline
            \rowcolor{Gray}
			500 & 87 \% & 87.2 \% & 86.8 \% \\
			\hline
            1000 & 87.9 \% & 87.3 \% & 87.4 \% \\
			\hline
            \rowcolor{Gray}
            2000 & 88.2 \% & 87.4 \% & 87.55 \% \\
			\hline
	\end{tabular}
	\caption{Accuracy Analysis -- \ecgPPML{}  \label{tab:ecgPPMLAcc}}
\end{table}
\noindent Subsequently, we evaluated the computational overhead of the 
\ecgPPML{} protocol (\autoref{tab:ecgPPMLComp}). In the integer and float plaintext inference protocol, the client and analyst are not required to perform any computation as they outsource all their data and neural network model to the CSP. 
Therefore, in~\autoref{tab:ecgPPMLComp}, the client and analyst only have a single column for the encrypted inference results. Looking at these results (i.e.\ across all data input examples for the encrypted inference protocol), we observed that the CSP is responsible for the most computational overhead (99\% or even more), which increased linearly with the number of data inputs. Compared to plaintext inference, encrypted inference is more computationally expensive. However, encrypted inference for one data sample takes~12.18 seconds on a commercial desktop, which is a promising result. Furthermore, these experimental results align with our goal and expectation for the HHE protocol, as we want the client and analyst to do minimal work, and most of the computations take place in the CSP.

Finally, we analyzed the communication cost of the \ecgPPML{} protocol and reported the results in~\autoref{tab:ecgPPMLComm}. For the communication between the client and the CSP, we observed that when the number of data inputs was low (1-500), the encrypted communication cost was very high compared to the plaintext costs due to the size of the HE ciphertext of the symmetric key (1.8 Mb). However, once the number of input examples increased, the size of the symmetrically encrypted data being sent to the CSP increased linearly, similar to the plaintext size, making this difference unimportant. There is no communication between the client and the analyst in plaintext protocols, while in the encrypted inference protocol, the client only transfers the HE public key to the analyst, which has a fixed size of 2.06 Mb. Overall, the communication cost for the client was minimal and increased linearly with the number of data inputs submitted to the CSP. The communication cost between the analyst and CSP was also minimal for plaintext inference since the plaintext weights, biases, and results were small. On the other hand, the majority of the communication cost for the encrypted inference protocol was incurred between the analyst and CSP. This increased cost is caused by the HE-encrypted output of the linear layer that needs to be sent from the CSP to the analyst. Hence, if the number of data inputs increases, the communication cost between the analyst and CSP will increase linearly. We observed that the communication cost for 2000 data input examples is 5548.21 Mb, or about 5 Gb of data that needs to be transmitted, which is a reasonable result for today's internet bandwidth.

\begin{table}[ht!]
\centering

\scalebox{0.8}{
\begin{tabular}{|c|c|c|c|c|c|} 
\hline
\rowcolor{frenchblue}
\textcolor{white}{Data Inputs} & \textcolor{white}{Client}    & \textcolor{white}{Analyst}   & \multicolumn{3}{c|}{\textcolor{white}{CSP }}                                                                                                                             \\ 
\hline
\rowcolor[rgb]{0.208,0.518,0.894}                               & \textcolor{white}{Encrypted} & \textcolor{white}{Encrypted} & \multicolumn{1}{c|}{\textcolor{white}{Plaintext (float)}} & \multicolumn{1}{c|}{\textcolor{white}{Plaintext (integer)}} & \textcolor{white}{Encrypted}                 \\ 
\hline
1                                                               & 0.16                         & 0.52                          & 0                                                           & 0.23                                                        & 12.18                                        \\ 
\hline
\rowcolor{Gray} 10                            & 0.18                         & 0.58                         & 0                                                           & 0.24                                                        & 120.44   \\ 
\hline
20                                                              & 0.2                         & 0.63                         & 0                                                           & 0.23                                                        & 241.21                                       \\ 
\hline
\rowcolor{Gray} 50                            & 0.27                         & 0.803                         & 0                                                           & 0.25                                                        & 601.95                                       \\ 
\hline
100                                                             & 0.4                         & 1.091                         & 0                                                           & 0.24                                                        & 1212.38                                      \\ 
\hline
\rowcolor{Gray} 500                           & 1.31                         & 3.38                         & 0.05                                                        & 0.24                                                        & \textcolor[rgb]{0.051,0.051,0.051}{6021.98}  \\ 
\hline
1000                                                            & 2.46                         & 6.2                         & 0.1                                                         & 0.25                                                        & \textcolor[rgb]{0.051,0.051,0.051}{12058.2}  \\ 
\hline
\rowcolor{Gray} 2000                          & 4.8                        & 11.92                        & 0.2                                                         & 0.27                                                        & \textcolor[rgb]{0.051,0.051,0.051}{24153.5}  \\ 
\hline
\multicolumn{1}{l}{}                                            & \multicolumn{1}{l}{}         & \multicolumn{1}{l}{}         & \multicolumn{1}{l}{}                                        & \multicolumn{1}{l}{}                                        & \multicolumn{1}{l}{}                        
\end{tabular}}
\caption{Computation Analysis -- \ecgPPML{}. All values in seconds.
\label{tab:ecgPPMLComp}}
\end{table}

\noindent The above experimental results show that our \ecgPPML{} protocol produces comparable results in accuracy compared to inference on plaintext data. Furthermore, the CSP is responsible for most of the computation costs, and the majority of communication cost also occurs between the CSP and the analyst. Meanwhile, the client is responsible for a very small amount of computation as well as communication. These results align with our vision of using HHE for PPML applications and show immense potential for HHE when applied in real-world PPML applications. 

\begin{table}[!ht]
\centering
\scalebox{0.7}{
\begin{tabular}{|c|c|c|c|c|c|c|c|}
\hline
\rowcolor{frenchblue} \textcolor{white}{Data Inputs} & \multicolumn{3}{c|}{\textcolor{white}{Client - CSP}}                                                                                                                                                                                                                                                                       & \textcolor{white}{Client - Analyst} & \multicolumn{3}{c|}{\textcolor{white}{Analyst - CSP }}                                                                                 \\ 
\hline
\rowcolor[rgb]{0.208,0.518,0.894}                               & \begin{tabular}[c]{@{}>{\cellcolor[rgb]{0.208,0.518,0.894}}c@{}}\textcolor{white}{Plaintext}\\\textcolor{white}{(float)}\end{tabular} & \begin{tabular}[c]{@{}>{\cellcolor[rgb]{0.208,0.518,0.894}}c@{}}\textcolor{white}{Plaintext}\\\textcolor{white}{(integer)}\end{tabular} & \textcolor{white}{Encrypted}             & \textcolor{white}{Encrypted}        & \begin{tabular}[c]{@{}>{\cellcolor[rgb]{0.208,0.518,0.894}}c@{}}\textcolor{white}{Plaintext}\\\textcolor{white}{(float)}\end{tabular} & \begin{tabular}[c]{@{}>{\cellcolor[rgb]{0.208,0.518,0.894}}c@{}}\textcolor{white}{Plaintext}\\\textcolor{white}{(integer)}\end{tabular} & \textcolor{white}{Encrypted}                 \\ 
\hline
1                                                               & \textcolor[rgb]{0.024,0.024,0.024}{0.0002}                                                                                            & \textcolor[rgb]{0.024,0.024,0.024}{0.0002}                                                                                              & \textcolor[rgb]{0.051,0.051,0.051}{1.8} & 2.06                                & \textcolor[rgb]{0.024,0.024,0.024}{0.0017}                                                                                            & \textcolor[rgb]{0.024,0.024,0.024}{0.000734}                                                                                            & 72.46                                        \\ 
\hline
\rowcolor{Gray} 10                            & \textcolor[rgb]{0.024,0.024,0.024}{0.002}                                                                                             & \textcolor[rgb]{0.024,0.024,0.024}{0.002}                                                                                               & 1.8                                     & 2.06                                & \textcolor[rgb]{0.024,0.024,0.024}{0.0017}                                                                                            & \textcolor[rgb]{0.024,0.024,0.024}{0.000734}                                                                                            & 97.11                                        \\ 
\hline
20                                                              & \textcolor[rgb]{0.024,0.024,0.024}{0.005}                                                                                             & \textcolor[rgb]{0.024,0.024,0.024}{0.005}                                                                                               & 1.81                                     & 2.06                                & \textcolor[rgb]{0.024,0.024,0.024}{0.0017}                                                                                            & \textcolor[rgb]{0.024,0.024,0.024}{0.000734}                                                                                            & 124.51                                       \\ 
\hline
\rowcolor{Gray} 50                            & \textcolor[rgb]{0.024,0.024,0.024}{0.012}                                                                                             & \textcolor[rgb]{0.024,0.024,0.024}{0.012}                                                                                               & 1.81                                     & 2.06                                & \textcolor[rgb]{0.024,0.024,0.024}{0.0017}                                                                                            & \textcolor[rgb]{0.024,0.024,0.024}{0.000734}                                                                                            & 206.692                                      \\ 
\hline
100                                                             & \textcolor[rgb]{0.024,0.024,0.024}{0.029}                                                                                             & \textcolor[rgb]{0.024,0.024,0.024}{0.029}                                                                                               & 1.83                                     & 2.06                                & \textcolor[rgb]{0.024,0.024,0.024}{0.0017}                                                                                            & \textcolor[rgb]{0.024,0.024,0.024}{0.000734}                                                                                            & 343.643                                      \\ 
\hline
\rowcolor{Gray} 500                           & 1.1                                                                                                                                   & \textcolor[rgb]{0.024,0.024,0.024}{1.1}                                                                                                 & \textcolor[rgb]{0.051,0.051,0.051}{2.9} & 2.06                                & \textcolor[rgb]{0.024,0.024,0.024}{0.0017}                                                                                            & \textcolor[rgb]{0.024,0.024,0.024}{0.000734}                                                                                            & \textcolor[rgb]{0.051,0.051,0.051}{1439.27}  \\ 
\hline
1000                                                            & 2.3                                                                                                                                   & \textcolor[rgb]{0.024,0.024,0.024}{2.3}                                                                                                 & \textcolor[rgb]{0.051,0.051,0.051}{4.1} & 2.06                                & \textcolor[rgb]{0.024,0.024,0.024}{0.0017}                                                                                            & \textcolor[rgb]{0.024,0.024,0.024}{0.000734}                                                                                            & \textcolor[rgb]{0.051,0.051,0.051}{2809.02}  \\ 
\hline
\rowcolor{Gray} 2000                          & 4.6                                                                                                                                   & \textcolor[rgb]{0.024,0.024,0.024}{4.6}                                                                                                 & \textcolor[rgb]{0.051,0.051,0.051}{6.4} & 2.06                                & \textcolor[rgb]{0.024,0.024,0.024}{0.0017}                                                                                            & \textcolor[rgb]{0.024,0.024,0.024}{0.000734}                                                                                            & \textcolor[rgb]{0.051,0.051,0.051}{5548.21}  \\ 
\hline
\multicolumn{1}{l}{}                                            & \multicolumn{1}{l}{}                                                                                                                  & \multicolumn{1}{l}{}                                                                                                                    & \multicolumn{1}{l}{}                     & \multicolumn{1}{l}{}                & \multicolumn{1}{l}{}                                                                                                                  & \multicolumn{1}{l}{}                                                                                                                    & \multicolumn{1}{l}{}                        
\end{tabular}
}
\caption{Communication Analysis -- \ecgPPML{}. All values in Megabytes. \label{tab:ecgPPMLComm}}
\end{table}
\noindent \textbf{\textit{Open Science \& Reproducible Research}}
To support open science and reproducible research, and provide other researchers with the opportunity to use, test, and hopefully extend our work, the source codes used for the evaluations have been anonymized and made available online\footnote{\url{https://github.com/iammrgenie/hhe_ppml}}\textsuperscript{,}\footnote{\url{https://github.com/khoaguin/PocketHHE}}.

\section{Conclusion}
\label{sec:conclusion}
This paper is one of the first attempts at using the novel concept of HHE 
effectively to address the problem of privacy-preserving machine learning. 
We have provided a realistic solution that carefully considers the vagaries of PPML. The designed approach is able to carefully balance between ML functionality and privacy so as to allow the use of PPML techniques in a wide range of areas such as pervasive computing, where, in many cases, the underlying infrastructure presents certain inbuilt limitations. By using HHE, we managed to overcome the main difficulties of PPML application in real-life scenarios, where the majority of data is collected and processed by constraint devices. Certain that the future of cryptography goes hand in hand with ML, we believe we have made the first step towards implementing PPML services with strong security guarantees, which operate efficiently in a wide range of architectures. 

\section*{Acknowledgment}
This work was funded by the HARPOCRATES EU research project (No. 101069535).

\bibliographystyle{ieeetr}
\balance
\bibliography{pervPPML}

\end{document}